\documentclass[11pt,a4paper]{article}
\usepackage[utf8] {inputenc}
\usepackage{amsmath,amsfonts,amssymb,amsthm}
\usepackage{xspace}
\usepackage{graphicx}
\usepackage{epstopdf}
\usepackage{subcaption}
\usepackage{enumitem}
\usepackage{float}
\usepackage{placeins}

\usepackage[parfill]{parskip}
\usepackage{authblk}

\usepackage[colorlinks,linkcolor=,citecolor=,urlcolor=blue]{hyperref}

\title{Bejeweled, Candy Crush and other Match-Three Games are (NP-)Hard}

\author[1]{L. Gualà}
\author[2]{S. Leucci}
\author[3]{E. Natale}

\affil[1]{Università degli Studi di Roma \emph{Tor Vergata} \authorcr {\tt guala@mat.uniroma2.it}}
\affil[2]{Università degli Studi dell'Aquila \authorcr {\tt stefano.leucci@univaq.it}}
\affil[3]{\emph{Sapienza} Università di Roma \authorcr {\tt natale@di.uniroma1.it}}

\newtheorem{definition}{Definition}

\newtheorem{theorem}[definition]{Theorem}

\newcommand{\NP}{\ensuremath{\textsf{\upshape NP}}}

\newcommand{\gemfont}{\fontfamily{phv}\selectfont}
\newcommand{\A}{{\gemfont{}A}\xspace}
\newcommand{\B}{{\gemfont{}B}\xspace}
\newcommand{\C}{{\gemfont{}C}\xspace}
\newcommand{\D}{{\gemfont{}D}\xspace}
\newcommand{\E}{{\gemfont{}E}\xspace}
\newcommand{\F}{{\gemfont{}F}\xspace}

\newtheorem{corollary}{Corollary}

\begin{document}
\maketitle

\begin{abstract}
The twentieth century has seen the rise of a new type of video games targeted at a mass audience of ``casual'' gamers. Many of these games require the player to swap items in order to form matches of three and are collectively known as \emph{tile-matching match-three games}.
Among these, the most influential one is arguably \emph{Bejeweled} in which the matched items (gems) pop and the above gems fall in their place.
Bejeweled has been ported to many different platforms and influenced an incredible number of similar games. Very recently one of them, named \emph{Candy Crush Saga} enjoyed a huge popularity and quickly went viral on social networks.
We generalize this kind of games by only parameterizing the size of the board, while all the other elements (such as the rules or the number of gems) remain unchanged. Then,
we prove that answering many natural questions regarding such games is actually \NP-Hard. These questions include determining if the player can reach a certain score, play for a certain number of turns, and others.
We also \href{http://candycrush.isnphard.com}{provide} a playable web-based implementation of our reduction.
\end{abstract}

\newpage

\section{Introduction}
The twentieth century has seen the rise of a video game industry targeted at a mass audience of ``casual'' gamers \cite{kuittinen2007casual}. This new market has been encouraged by the advent of web applications and the widespread diffusion of smartphones, that created a new type of ``occasional'' player looking for a gameplay experience that demands less commitment than traditional video games \cite{walsh2014candy}. As a consequence, casual games rely on very simple rules and can therefore be easily classified according to them. 

\begin{figure}[!h]
	\centerline{\includegraphics[scale=0.8]{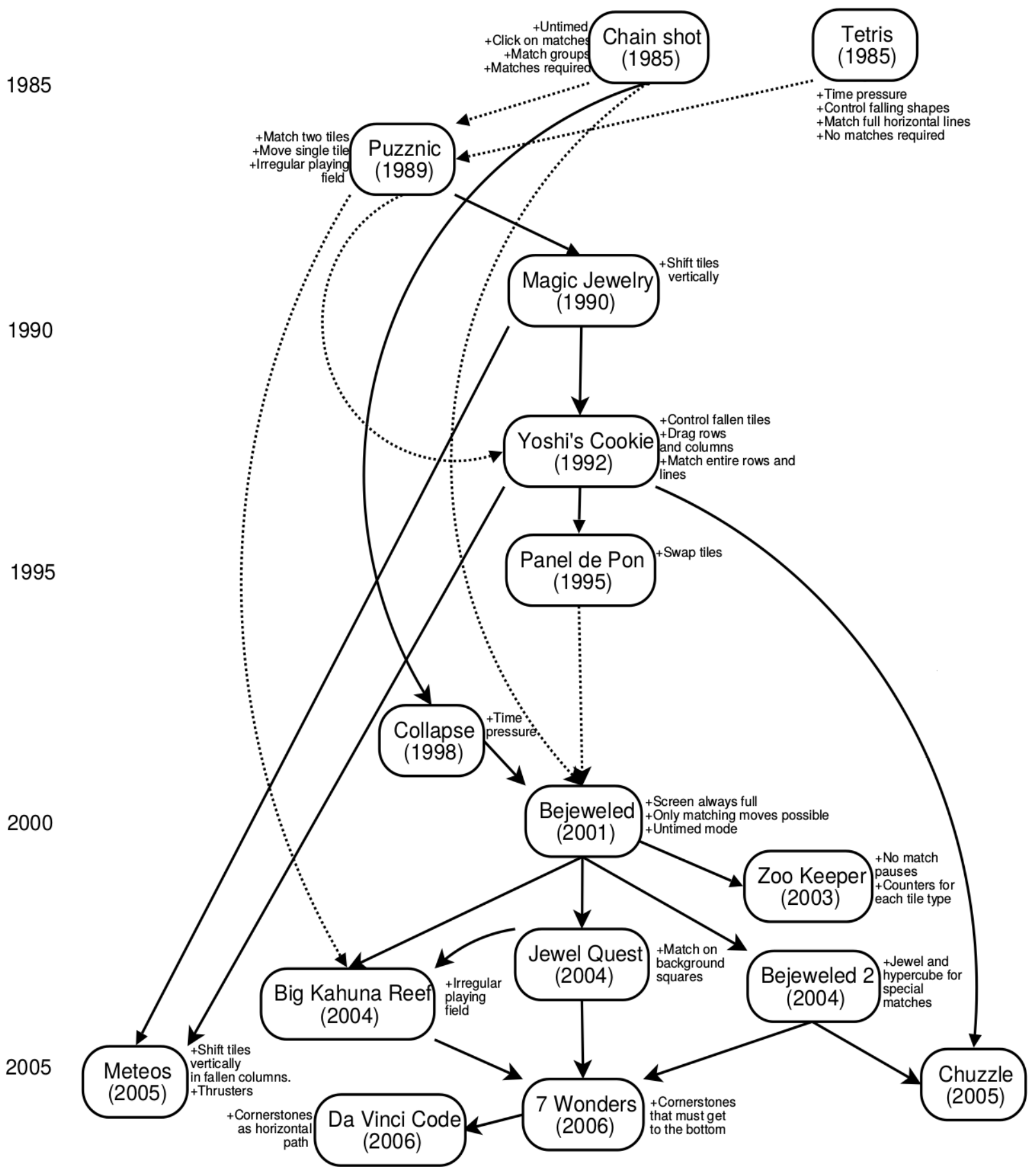}}
	\caption{The ``genealogy'' of Bejeweled\protect\footnotemark.}
	\label{fig:genealogy}
\end{figure}
\footnotetext{This is a modified image. Original image courtesy of Jesper Juul.}
\FloatBarrier

The most successful and influential game that was produced in this scenario was Bejeweled, developed for browsers by PopCap Games in 2001. It can be considered one of the most representative games of the  class of \textit{tile-matching games}, and the most important game of the class of \textit{match-three games} \cite{juul2012casual}. Among the modern games whose rules mainly recall those of Bejeweled there are AniPang, Aurora Feint, Beghouled \& Beghouled Twist, Candy Crush Saga, Diamond Twister, Gweled, Goober's Lab, Jewel Quest, King Boo's Puzzle Attack, Magic Duel, ``Puzzle Quest: Challenge of the Warlords'', Sutek's Tomb, Switch, The Treasures of Montezuma, Zoo Keeper and many others. Figure~\ref{fig:genealogy} outlines an historical perspective on the game ``ancestores'' and ``descendants'' while Figure~\ref{fig:clones} shows some screenshots of some match-three games we analyze.

The gameplay of Bejeweled is the following: 
\begin{itemize}
	\item The game is played on a board consisting of a $8 \times 8$ grid where each cell initially contains exactly one out of 6 ``\emph{gems}''.
	\item Each gem in position $(i,j)$ (row $i$ and column $j$), is considered to be adjacent to its horizontally and vertically adjacent cells, that are $\left( i-1, j \right), \left( i+1, j \right), \left( i, j-1 \right)$ and $\left( i,j+1 \right)$ (with obvious exceptions at the border of the grid).
	\item A player's move consists in swapping the positions of two adjacent gems provided that this move cause the vertical or horizontal alignment of three or more adjacent gems of the same kind. 
	\item When three or more adjacent gems of the same kind end up being vertically or horizontally aligned, they ``pop'' awarding some points to the player and disappearing from the board; the cells left empty are immediately filled with the above gems that fall towards the bottom of the grid (one can think of it as being the effect of some ``gravity''); moreover, as gems fall, the empty cells at the top of the column are filled with newly generated gems (again, one can think of them as coming from above the visible part of the grid). The whole process is repeated until there are no more gems that pop.
\end{itemize}

To summarize, the game mechanic follows this high-level algorithm:
\begin{enumerate}
	\item While there is one or more group of three or more adjacent gems of the same kind:
	\begin{enumerate}
		\item \label{rule:pop} All those gems are popped at the same time;
		\item The remaining gems fall vertically to fill the gaps and new gems are generated at the top of the board.
	\end{enumerate}

	\item If there are no allowed moves the game ends, otherwise wait for the player to make a valid move and repeat from point 1.
\end{enumerate}

 \begin{figure}[!tb]
    \centering
    \begin{subfigure}[b]{0.42\textwidth}
	  \includegraphics[scale=0.41]{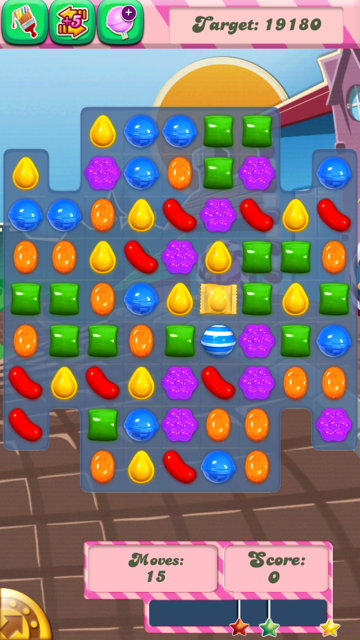}
      \caption{Candy Crush Saga}\label{fig:ccs}
    \end{subfigure}
    \hfill
    \begin{minipage}[b]{0.45\textwidth}
      \begin{subfigure}[b]{\linewidth}
	  \includegraphics[scale=0.24]{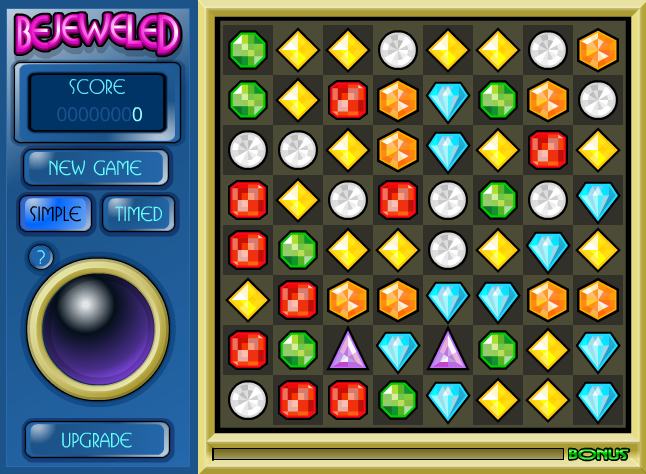}
        \caption{Bejeweled}\label{fig:bejeweled}
      \end{subfigure}\\[\baselineskip]
      \begin{subfigure}[b]{\linewidth}
	  \includegraphics[scale=0.24]{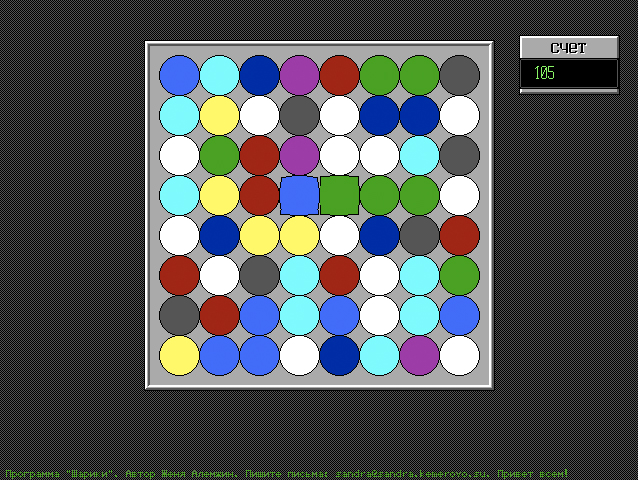}
        \caption{Shariki}\label{fig:shariki}
      \end{subfigure}
    \end{minipage}
    \caption{Screenshoots of three different tile-matching match-three games.}
    \label{fig:clones}
  \end{figure}

  Inspired by the many recent studies on the computational complexity of video games \cite{hearn2009games, demaine2003tetris, viglietta2012gaming, forivsek2010computational, kendall2008survey}, we prove that Bejeweled is an NP-hard problem. Our reduction remains valid for many of its descendant games cited above. Among them it is worth mentioning Candy Crush Saga developed by King as a Facebook and smartphones' game. This game was released in 2012 and quickly went viral, as of March 2013 it reportedly had more than 45 million average monthly users on Facebook only. 
  
\FloatBarrier
\paragraph{Our results}

We prove that Bejeweled and all similar match-three games are \NP-hard. In order to do so we first need to generalize the game and then we have to define what a decision problem associated with the game is. Intuitively, we will show that in a very natural generalization of the game it is \NP-hard to determine if the player can make a sequence of moves in order to satisfy some natural property.

In particular we consider the \emph{offline} version of Bejeweled, i.e., the game where the player actually knows the whole board and all the new gems that will appear on the top of the columns after a pop. 
The intuition is that determining if any reasonable property can always be satisfied in the \emph{online} version will not be easier than in the offline version.

Our generalization of the game is very natural: instead of playing on a board of $8\times8$ cells, we study instances of the game on a generic board of size $w\times h$; all other elements of the game remain unchanged (for example, we will only use $6$ kinds of gems as in the original game).

We mainly study the following question:
\begin{enumerate}[label=(Q\arabic*)]
	\item \label{q1} \textbf{Is there a sequence of moves that allows the player to pop a specific gem?}
\end{enumerate}
We show that answering the above question is an \NP-Complete problem.
Moreover our reduction can be easily adapted to provide the same result for the following natural questions:
\begin{enumerate}[label=(Q\arabic*),start=2]
	\item \label{q2} \textbf{Can the player get a score of at least $x$?}
	\item \label{q3} \textbf{Can the player get a score of at least $x$ in less than $k$ moves?}
	\item \label{q4} \textbf{Can the player cause at least $x$ gems to pop?}
	\item \label{q5} \textbf{Can the player play for at least $x$ turns?}
\end{enumerate}

Since our reduction is such that only matches of exactly three gems will form, our result extends to a wide class of match-three games as Shariki, Candy Crush Saga, and others.

To the best of our knowledge the only related result has been provided very recently and independently from us in \cite{walsh2014candy}, where a proof of \NP-hardness for Candy Crush was claimed. 
However, the proposed proof heavily relies on a rule different from the original one \eqref{rule:pop} given above, namely: ``In case multiple chains are formed simultaneously, we assume that chains are deleted from the bottom of the board to the top as they appear, and the candies above immediately drop down''. 
This assumption makes the gameplay quite different from that of Bejeweled/Candy Crush. Just to give an example, we can think of a situation where a swap leads to a simultaneous horizontal alignments of gems on several different rows such that the following holds: within the original game this gems pop at the same time, if instead the cited rule is assumed we would have that, after the gems in the bottom row pop, every other alignment is broken resulting in a configuration where no more matches are left.

Our result is also stronger in the following sense: the reduction provided in~\cite{walsh2014candy} shows that it is \NP-Hard to determine if a certain score can be obtained in a given number of moves (as in \ref{q3}). However, in their instances of the game, if the constraint on the number of moves is removed, then it is easy to find (in polynomial time) a strategy that achieves the given score. This is quite undesirable since, while in Candy Crush there is a limit on the number of player's moves, no such constraint exists in Bejeweled and in other games.
On the contrary, this problem does not arise in our reduction since we consider problems where the number of moves is unconstrained. 

Finally, we also provide a (playable!) web-based implementation of our reduction, released under GPL license, which is accessible at the URL \linebreak \url{http://candycrush.isnphard.com}.

\section{Reduction overview}

The reduction is from \emph{1-in-3 positive 3SAT} (in short 1in3PSAT), which is known to be NP-hard \cite{garey1979computers}. In this problem, we are given $n$ variables $x_1,\dots,x_n$, $m$ clauses, each having at most 3 variables, and we want to choose a subset of variables to set to true in such a way that every clause has exactly one true variable. We call such a truth assignment a \emph{satisfying assignment}. Given an instance of 1in3PSAT, we build an instance of Bejeweled where a given goal gem can be popped if and only if the corresponding 1in3PSAT instance is satisfiable. We will argue later how this reduction can be simply modified to show that answering to any of the questions discussed above is also \NP-Complete.

We now provide an intuitive description of the reduction which uses a set of high-level gadgets. Then, we will discuss how to implement each gadget. The whole scheme of the reduction in given in Figure \ref{fig:overview}. To each variable corresponds certain adjacent rows (which contains certain \emph{wire gadgets} and \emph{displacer gadgets}), while each clause corresponds to three adjacent columns.

We have also a gadget called \emph{sequencer}. We use the sequencer to ensure that the control of the player follows a prescribed way. At the beginning there are only two feasible moves at the \emph{start point}, which intuitively coincides with the choice of $x_1$. Then, the control will be moved down through all the \emph{wires} of $x_1$'s gadget until it reaches the $x_2$'s gadget, and so on. Once the gadget of the last variable has been traversed, the control is given to the \emph{check point} in the \emph{goal area} where a specified \emph{goal wire} can be taken if and only if the previous choices have made the formula satisfied.

\begin{figure}[!t]
	\centerline{\includegraphics{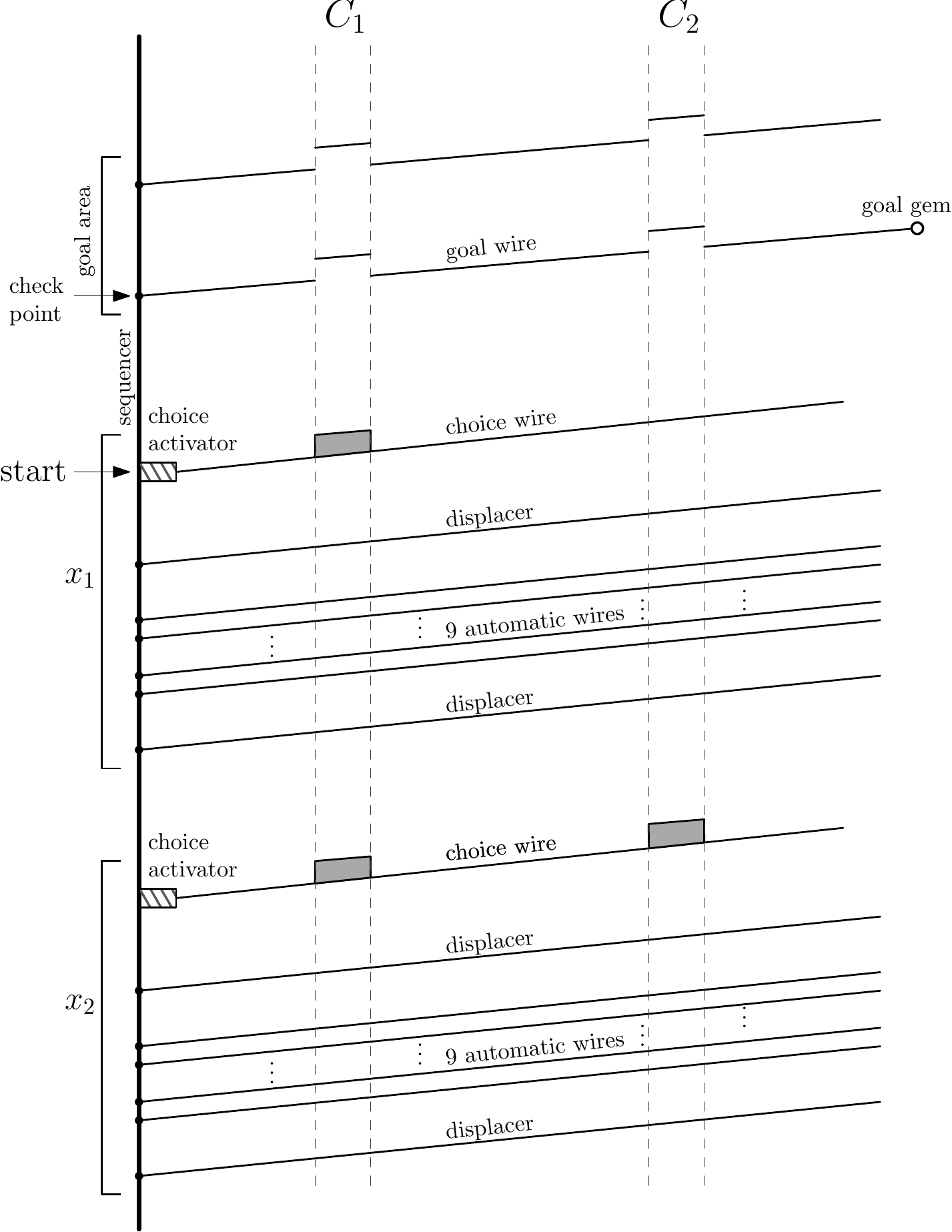}}
	\caption{Overview of the structure of the generated instance.}
	\label{fig:overview}
\end{figure} 

Let us describe the $x_i$'s gadgets. We have one \emph{choice wire}, a first displacer, nine \emph{automatic wires}, and a final displacer. The choice wire can be either activated or skipped, which intuitively corresponds to setting $x_i$ to true or false. Activating or skipping $x_i$'s choice wire causes the columns to fall by a different number of cells. More precisely, if we set $x_i$ to false, for every column the number of popped gems will be a multiple of 6. Otherwise, if we set $x_i$ to true, the columns corresponding to the clauses which contain $x_i$ fall by some number $\ell \equiv 2 \pmod{6}$. In order to prevent that the already made choice to be changed, a suitable gadget named \emph{shredder} will destroy the choice wire by making the columns in odd and even positions fall by respectively different amounts. Once we have traversed all the gadgets of the variables, the control is given to the check point in the goal area.

The goal area is a set of copies of a goal wire, one copy each six rows. Intuitively, a goal wire is a wire which is not aligned in correspondence to the columns of the clauses by a gap of two cells. If we have found a satisfying assignment of the initial 1in3PSAT instance, every goal wire gets aligned and there is one of them which allows to reach the goal gem. On the other hand, any assignment which does not satisfy the initial instance will not make any goal wire to be aligned. Indeed, satisfying none or more than one variable in a clause will result in falling by either a number $\ell \equiv 0 \pmod{6}$ or  $\ell \equiv 4 \pmod{6}$ of cells. This would imply that no goal wire can be activated.

Finally, in order to prevent unwanted interactions, we leave large gaps between the gadgets of one variable and that of the next one. We also leave a large gap between the end of goal zone and the gadgets of the first variable.

\FloatBarrier
\section{Reduction details}

In this section we describe the reduction in all details.

\subsection{Notation}

We will refer to the six different kinds of gems using the letters \A,\B,\C,\D,\E, and \F. Moreover, we will also use a single dot to represent some kind of special gems. Dot-gems never interact with each other nor with other gems, so they cannot pop, but they still fall due to gravity.
We stress that this special dot-gems are not needed at all for the reduction to work as they will only be used in order to present the used gadgets in a clear way, without being distracted by unnecessary gems and interactions.

After discussing all the gadgets we will show how these gems can be safely removed by using a certain  pattern to replace them.

\subsection{Sequencer}

The sequencer is a crucial gadget in our reduction. This gadget controls the order in which the other gadgets are activated. Roughly speaking, when the sequencer is ``running'' a sequence of pops of three gems is taking place (therefore the player cannot make any move) in order to move the control from a certain gadget to the next one.
We have two kinds of gadgets: automatic and manual ones. Automatic gadgets will trigger some effect as soon as they get the control from the sequencer, namely they get activated. However, they will not stop the sequencer which keeps running towards the next gadget.

On the other hand, manual gadgets pause the sequencer.
In this phase, the player can decide either to interact with the manual gadget or to skip it. In any case a certain swap is needed in order to set the sequencer back in motion.

The sequencer is positioned on the first column of the field and we can decompose its gems into two kinds of logical blocks: the first kind is composed of the gems needed to produce a chain of pops, as described before, while the second kind of blocks are the so-called ``sequencer records''.
A sequencer record is simply a sequence of gems that will align with the next gadget to be activated. When the record is in place, its gems along with the gems on the following two columns of the gadgets will interact (and some will pop) and this will effectively activate the gadget. After the control moves out of the gadget, the sequencer record will happen to be completely destroyed.

Please note that each gadget has its own sequencer record and that the record for different kinds of gadget are composed by different sequences of gems. In the following pictures, the sequencer record for the described gadgets will be found on the first column, already aligned, and its gems will be highlighted in red. 

We now describe how to construct a sequencer for a given set of gadgets.
We recall that, except for the goal wire, the gadgets have to be activated from top to bottom. Consider the first column to be initially empty, we start by placing the sequencer record for the first gadget, i.e. the choice wire corresponding to $x_1$, so that it is already aligned. This will cause the control to be immediately transferred to the first choice wire. Next we add the gems needed to reach the next gadget: we place a \C on top of the pile on the first column and two \C{}s underneath, then we perform the same operation using a \D, and we continue repeating the above, alternating \C{}s and \D{}s until the last gem on the bottom reaches the second gadget. In particular we want this gem to lie in the same place where the bottom gem of the sequencer record (of the second gadget) will have to be placed. At this point we place the sequencer record on the top of the pile, and repeat the whole procedure for each remaining gadget.

\begin{figure}[!t]
	\centerline{\includegraphics{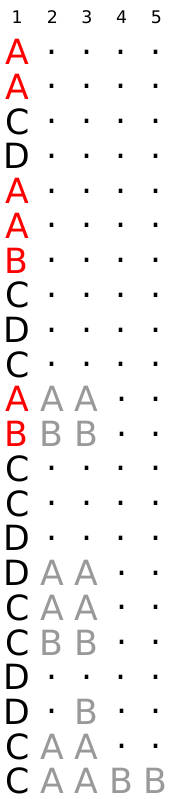}}
	\caption{Example of a sequencer.}
	\label{fig:sequencer}
\end{figure} 

An example is given in Figure~\ref{fig:sequencer}, where three dummy gadgets are shown in gray.\footnote{These gadgets do not serve any purpose and will not be used in the reduction. They are only used in order to explain how the sequencer works.} These gadgets will simply cause all their gems to pop when activated. Their corresponding activation records are shown in red while the other gems of the sequencer are shown in black.

The last gadget to activate is the goal wire. As this is placed above all the other gadgets, the construction we just described needs to be extended.
It is sufficient to make the sequencer fall by some additional large amount\footnote{This can easily be done by adding alternating patterns of \C{}s and \D{}s to the top and the bottom of the already built sequencer, as shown before.} after the activation of all the other gadgets, then we place the activation record for the goal wire (which will be a single \A{}) high enough on the first column in order to make it align with the goal wire after the large fall stops. 

As a final remark, we point out that in our instances all the falls will happen in multiples of three gems. Moreover, all the activation records we will use are designed so that, if a record is still falling towards its gadget, no unwanted matches can form between its gems and the adjacent gems of the gadget. 
 
\FloatBarrier
\subsection{Choice wire}
\label{sec:choice_wire}

\begin{figure}[!t]
	\centerline{\includegraphics{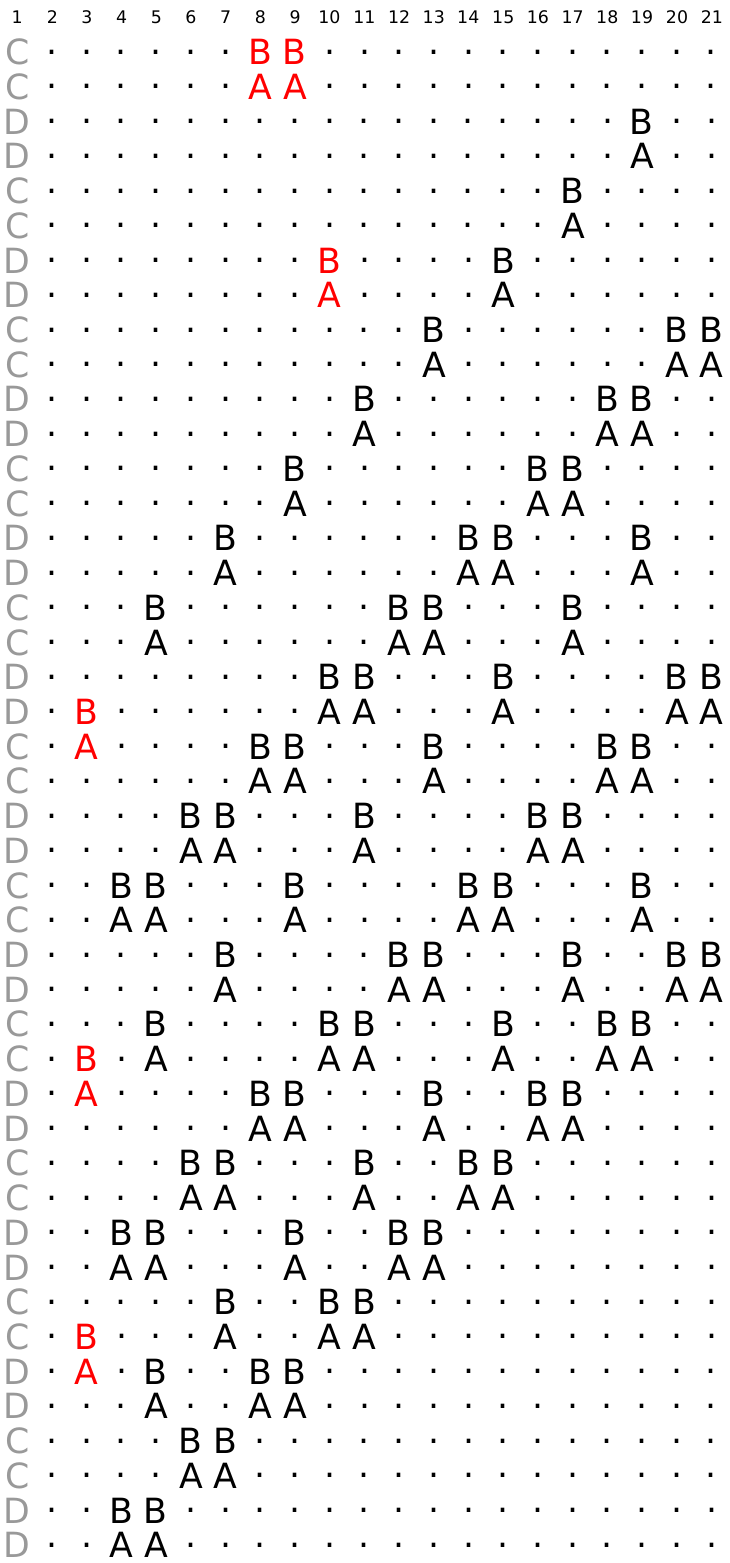}}
	\caption{Choice wire gadget.}
	\label{fig:choicewire}
\end{figure} 

The choice wire associated to a certain variable $x_i$ is a gadget which will be activated if and only if $x_i$ is set to true. 
Once the wire is activated (see the choice activator gadget) it ensures that the columns corresponding to the clauses containing $x_i$ fall by a number $\ell \equiv 2 \pmod{6}$ while other columns fall by multiples of 6. 

An example of a wire is shown in Figure~\ref{fig:choicewire}. The wire is activated when the third column falls by exactly $6$ cells, aligning the red pairs of gems to the black gems of the fourth and fifth columns, forming 6 matches at the same time. After the gems of these matches pop, the gems of the fifth column fall to fill the gaps and form a match with the gems in the next two columns.
This construction is repeated until the end of the wire, which is slightly different in order to ensure that all the gems pop.

The first group of three columns that correspond to a clause starts from column $8$ (and ends at column $10$). Every other group starts after $6$ columns from the start of the previous group (i.e., the second group start at column $14$ and ends at column $16$ and so on).

If $x_i$ belongs to a certain clause $c_j$, then two additional gems are placed on the columns corresponding to $c_j$. The gems corresponding to the first clause are shown in red in Figure~\ref{fig:choicewire}. When the wire is activated these gems will align, therefore the corresponding columns (from $2+j\cdot 6$ to $4+j\cdot 6$) will fall by an additional $2$ cells. 

Notice how, except for these gems, every other column falls by either $6$ or $12$ gems, in an alternating fashion.

\FloatBarrier
\subsection{Choice activator}

\begin{figure}[!t]
	\centerline{\includegraphics{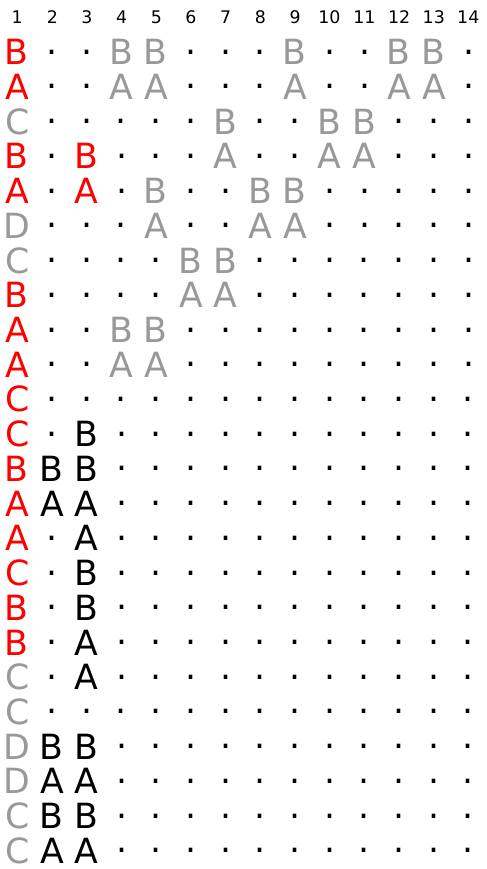}}
	\caption{Choice activator gadget.}
	\label{fig:manactiv}
\end{figure} 

The choice activator, associated with a variable $x_i$ is attached to both the sequencer and the choice wire for $x_i$. This gadget can either be activated by the player, effectively setting $x_i$ to true, or it can be skipped, effectively setting $x_i$ to false. An example of this gadget is shown in Figure~\ref{fig:manactiv}.

If the player chooses to activate the gadget then the third column will fall in such a way that it will activate the corresponding wire (shown in gray), otherwise no activation will occur.

We now the describe how such a gadget can be played by the player.
As soon as the sequencer record (shown in red) aligns with the gadget, two horizontal matches are formed and the corresponding gems are popped. Now the player has two available moves:
\begin{itemize}
	\item Swapping a \B on the third column with the \A underneath, causing three \B{}s to pop and then three \A{}s to pop. This also activates the wire as described before. This corresponds to setting $x_i$ to true.
	\item Swapping a \C on the first column with the \A beneath, causing the sequencer to continue it's operation. This will make the first two columns fall by $4$ cells and the control to be transferred to the next gadget. This corresponds to skipping the activation of the choice wire, effectively setting $x_i$ to false.
\end{itemize}

If the players chooses the first move then, after the wire has completely popped, he will also have to make the second move (which is the only one remaining).
If the player choses the second move this will cause the sequencer to (automatically) activate the next set of gadgets which will ``destroy'' the wire. As a consequence, the next time the player will have chance to swap a gem, it will no longer be possible to perform the first move.\footnote{Actually it will still be possible to pop the gems on the third column, but this will not produce any useful result.}

\FloatBarrier
\subsection{Shredder}

Consider the choice activator gadget for $x_i$, after the player swaps the \C in the first column, the control moves to the next gadget. Recall that this happens regardless of the choice made about $x_i$.
If the player has set $x_i$ to false we use a ``shredder'' gadget in order to ``destroy'' the wire so that it cannot be activated anymore.
Otherwise, $x_i$ is set to true and all the gems of the choose wire have already been popped. In this case the shredder will still be activated but, clearly, it won't produce any effect.

The shredder will work by moving odd and even columns far from each other.
We need, however, to be careful to avoid any fortuitous match of three gems of the previous choice wire (that has voluntarily been skipped).
Indeed if one column falls by a multiple of $2$ cells, three or more gems of the choice wire gadget might align. 
We solve this problem by breaking the alignment of those columns in such a way that any fall by an even number of cells cannot trigger spontaneous pops. This is achieved by the use of a displacer gadget (described in the following).

\begin{figure}[!t]
	\centerline{\includegraphics{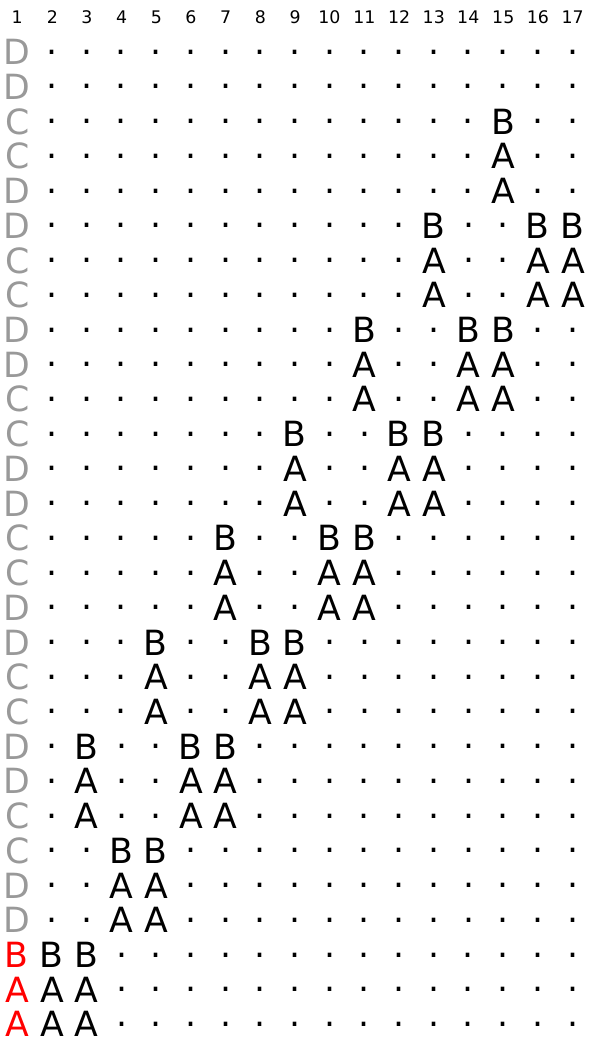}}
	\caption{Displacer gadget.}
	\label{fig:displacer}
\end{figure}

After the displacer we place a number of automatic wires (described in the following) that will make odd and even columns of the wire fall by even and largely different amounts. This will effectively prevent any further pop among the gems of the choice wire.

Finally, we place another displacer in order to guarantee that the total number of gems popped in each column during the whole process will be a multiple of $6$.

We now describe the displacer and the automatic wire gadgets.

\subsubsection{Displacer}

The displacer gadget is shown in Figure~\ref{fig:displacer} and its construction is quite simple.
Once activated with the proper sequencer record, a chain of pops will ensure that even columns fall by $3$ cells while odd columns fall by $6$ cells.\footnote{Except the last column, that will fall by $3$.}

Moreover while these pops occur, it will never happen that three or more gems align in the preceding choice wire.

\FloatBarrier
\subsubsection{Automatic wire (and activator)}

\begin{figure}[!t]
	\centerline{\includegraphics{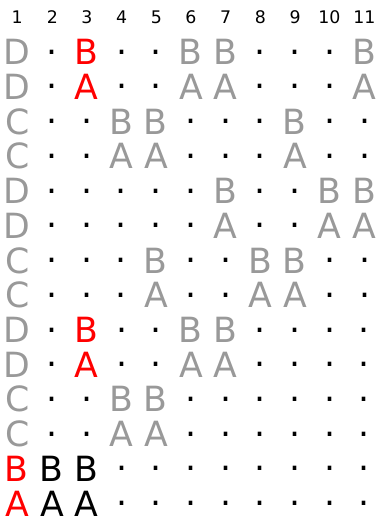}}
	\caption{Automatic activator gadget. The wire construction is similar to the one of the choice wire gadget and its initial portion is shown in gray.}
	\label{fig:autoactiv}
\end{figure}
 
An automatic wire is very similar to the choice wire described above, except it has no additional gems on the columns corresponding to the clauses and it is preceded by a so called ``automatic activator'' instead of the already described choice activator. 

The automatic activator is a very simple gadget that has the sole purpose of automatically activating a wire whenever its sequencer record falls in place. The design of an automatic activator is shown in Figure~\ref{fig:autoactiv}.

\FloatBarrier
\subsection{Goal wire}
\label{sec:goal_wire}

Once we have traversed all the variable gadgets, the sequencer gives the control to goal zone by placing an ``activating gem'' at the check point. 
The task of the goal zone is to ensure that the truth assignment resulting from the variable gadget activations is a satisfying assignment. If this is the case, it will result in the possibility to reach the goal gem from the check point.

\begin{figure}[!t]
	\centerline{\includegraphics{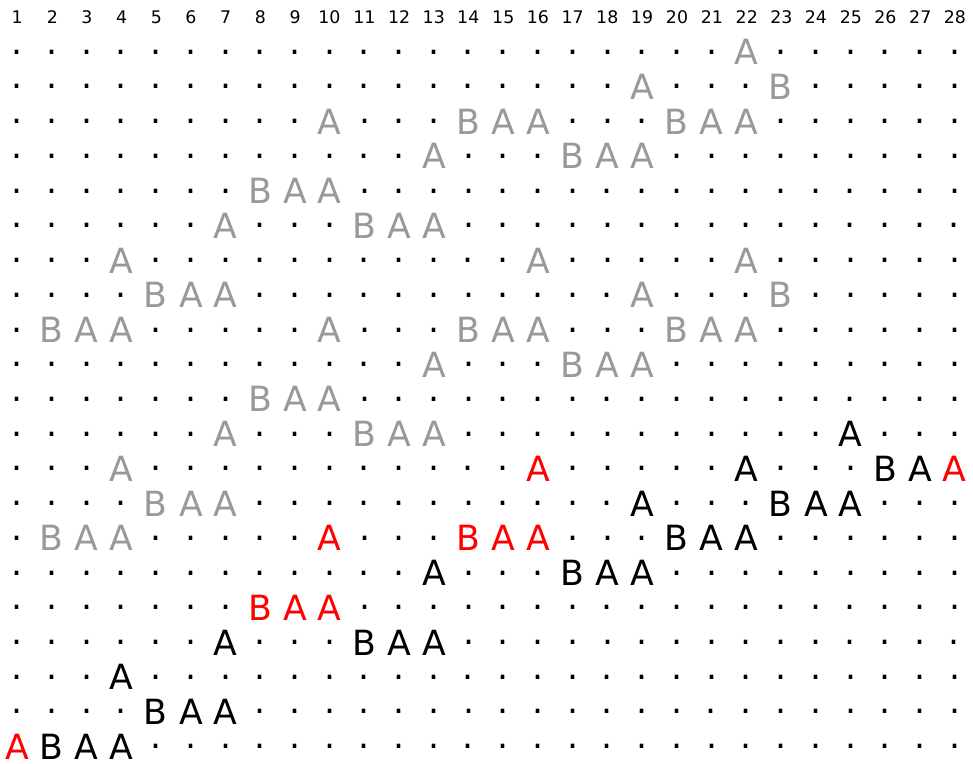}}
	\caption{Goal wire gadget in its original configuration. There has two clauses that are not satisfied so the wire cannot be traversed.}
	\label{fig:goalwire}
\end{figure} 

The goal zone essentially consists of a suitable number of copies of a gadget which we call goal wire. We have one copy each six row, with the bottom copy ending with an additional structure that leads to the goal gem, as shown in Figure~\ref{fig:overview} and Figure~\ref{fig:goalwire}. 
The reason for having a goal wire each six rows is the following: despite we don't know in advance by which amount the columns below the goal zone will fall, the underneath gadgets guarantees that in each column the number of popped gems is a multiple of six, except for the columns corresponding to the clauses. Indeed, each column corresponding to a given clause $c_i$ has fallen by $k\cdot2\equiv 0\pmod 6$, where $k$ is the number of variables of $c_i$ whose gadget has been activated. Since each clause has at most three variables then $0\leq k \leq 3$). If no variable of $c_i$ has been set to true, $k=0$ and also these columns has fallen by a multiple of six (this situation is depicted in Figure \ref{fig:goalwire}). On the other hand, if $k=1$ the gems of the goal wire in the columns of $c_i$ get aligned with the rest of the wire (see Figure \ref{fig:goalwire2}), allowing to reach the goal gem by a sequence of obvious swaps. 
Finally, note that if $k=2$ or $k=3$, the goal wire does not get aligned.

\begin{figure}[!t]
	\centerline{\includegraphics{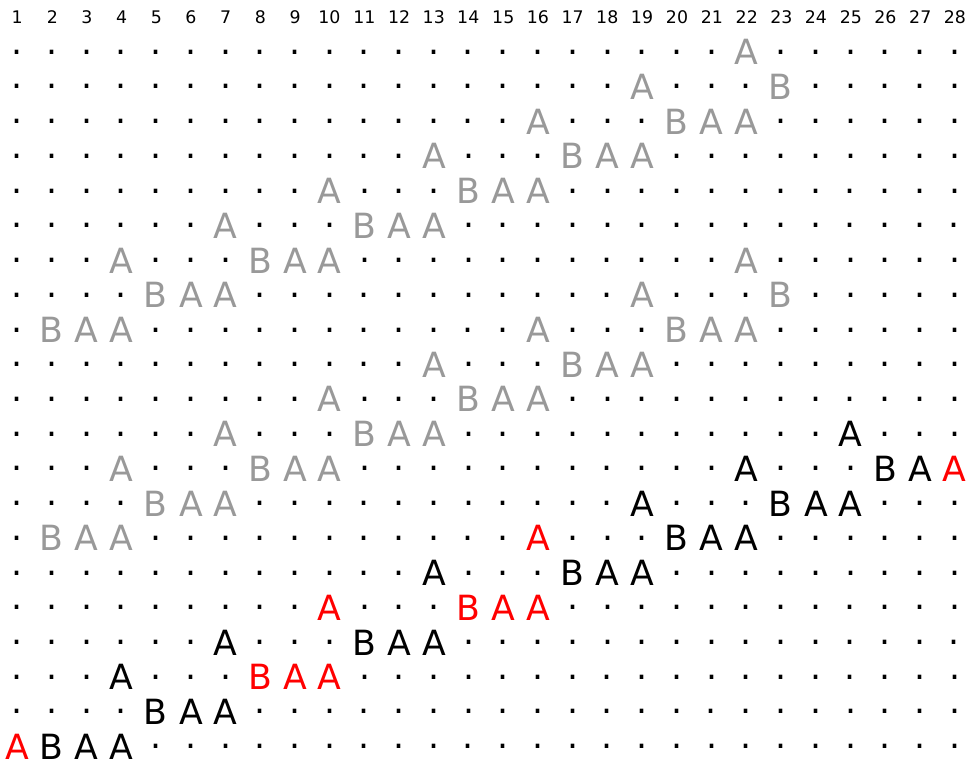}}
	\caption{Goal wire gadget where both clauses have been satisfied. The wire can be traversed to reach the goal gem (the red gem on column 28).}
	\label{fig:goalwire2}
\end{figure}

In order to activate the goal wire, the player can swap the \A on the first column with the \B on the second one. This will cause the now formed match of \A{}s to pop and allows the whole procedure to be repeated until the goal gem is reached.
While, after the first move, it is possible to also activate some goal wires that are placed above the one that gets aligned with the goal gem, doing so will not be of any help in reaching the goal gem.

\FloatBarrier
\subsection{Removing the dot-gems}

Finally, we discuss how the dot-gems we used in the previous gadgets can be safely replaced with other gems.

Consider the red pattern shown in Figure~\ref{fig:filler}, and notice that the player cannot swap any gem to form a match, even if the columns fall by different relative amounts.

As in all the columns but the first we only used \A{}s and \B{}s so far, we can safely replace all the dot-gems of those columns following the above pattern. We only consider dot-gems: when another gem is encountered it is simply ignored and we proceed to the next dot-gem. To clarify, we consider the second column and we move among its gems from top to bottom, every time we find a dot-gem we replace it with an \E or \F, in an alternating fashion. Then, we move to the next column where we use the gems \C and \D. The following column will again use \E and \F, and so on.

\begin{figure}[!t]
	\centerline{\includegraphics{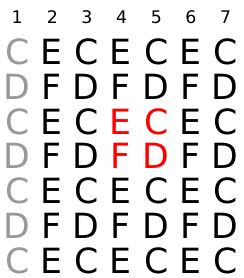}}
	\caption{Alternating pattern of gems where no swaps can be made.}
	\label{fig:filler}
\end{figure} 

Now, we only need to handle the first column. In order to do so we consider all the dot-gems placed in the first column above the sequencer record for the first gadget. Since in this portion of the column  the dot-gems can only be adjacent to \A{}s, \B{}s or \emph{alternating} sequences of \C{}s and \D{}s, it suffices to replace these dot-gems with \C{}s and \D{}s, following the already existent alternating fashion.
%Regarding the dot-gems placed below the first sequencer (i.e. after the end of the sequencer), we first place two \B{}s just after the end of the sequencer.
%Notice how this cannot allow the formation of any match as the last gems of the sequencer are alternating \C{}s and \D{}s and the last gadget in the field is a displacer.
%After these two \B{}s, we continue replacing any remaining dot-gem with the usual pattern of \C{}s and \D{}s.

When new gems are needed at the top of the board, they are generated according to the above pattern.

\section{Putting all together}

\begin{theorem}
	Given an instance of Bejeweled, the problem \ref{q1} of deciding whether there exists a sequence of swaps that allows a specific gem to be popped is \NP-Hard. 
\end{theorem}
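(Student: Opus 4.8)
The plan is to verify that the construction of the previous sections, with all gadgets assembled as in Figure~\ref{fig:overview}, is a polynomial-time many-one reduction from 1in3PSAT to \ref{q1}. Given a 1in3PSAT instance with variables $x_1,\dots,x_n$ and clauses $c_1,\dots,c_m$, I would first fix the exact layout: column $1$ carries the sequencer; clause $c_j$ occupies columns $2+6j$ through $4+6j$; and, reading the board from top to bottom, we place the goal zone, a large vertical gap, and then the variable blocks for $x_1,x_2,\dots,x_n$ in order, separated by large gaps, each block consisting of a choice activator, a choice wire, a first displacer, nine automatic wires with their automatic activators, and a final displacer. The dot-gems are then eliminated by the filler pattern. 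Each gadget has size polynomial in $n+m$, so the board dimensions are polynomial, only the six kinds \A--\F are used, and the instance together with the designated goal gem is clearly computable in polynomial time; this handles the ``easy'' part of the reduction.

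For correctness, the core claim I would prove is a \emph{structural lemma}: in any play, immediately after each player move the sequencer executes a forced chain of three-gem pops that deterministically carries the board from one decision point to the next --- either to the next choice activator, or, once all variable blocks have been processed, to the check point of the goal zone --- so that the only genuinely free choices in the whole play are, at each choice activator for $x_i$, whether to perform the \B/\A swap (which sets $x_i$ true and pops its choice wire) or the \C/\A swap (which sets $x_i$ false and triggers the shredder destroying the skipped wire). This rests on the gadget properties already established: every activation record is designed so that it cannot form a match while still falling towards its gadget; the large inter-gadget gaps confine every interaction to a single gadget; activating $x_i$'s choice wire makes exactly the columns of clauses containing $x_i$ fall by $\ell\equiv 2\pmod{6}$ and all other columns by a multiple of $6$; and the displacer/automatic-wire sequence following a skipped choice wire moves its odd and even columns by even but widely different amounts, so no stray alignment of that wire can occur, while leaving each column's total pop count a multiple of $6$. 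Moreover, once a choice activator has been passed its choice wire has been either fully popped (in the true case) or shredded (in the false case), so no later move can revise the choice; hence every play determines a well-defined truth assignment $\tau$.

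It remains to connect $\tau$ with the goal wires. By the gadget invariants, when control reaches the goal zone every non-clause column has fallen by a multiple of $6$, while each clause column $c_j$ has fallen by an amount congruent to $2k_j$ modulo $6$, where $k_j\in\{0,1,2,3\}$ is the number of variables of $c_j$ that $\tau$ sets true. Since each goal-wire copy is laid out with a two-cell vertical offset on the clause columns and the copies recur every six rows, a copy becomes fully aligned --- hence traversable by the obvious swaps down to the goal gem --- exactly when $2k_j\equiv 2\pmod{6}$ for every $j$, i.e.\ when $k_j=1$ for all $j$, i.e.\ when $\tau$ is a $1$-in-$3$ satisfying assignment. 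Conversely, if some $k_j\in\{0,2,3\}$ then that clause column is displaced by $0$ or $4$ modulo $6$ in every copy, so no copy aligns, and, as already observed for the goal wire, activating the misaligned wires is of no use; the goal gem is then unreachable. Therefore the goal gem can be popped if and only if the formula is satisfiable, which completes the reduction and proves the statement.

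The step I expect to be most delicate is the structural lemma, specifically ruling out moves or pops that the intended narrative does not anticipate. The levers for this are exactly the facts highlighted above --- the shredder's guaranteed even-but-incommensurate displacements, the design of every activation record so that it cannot match anything while in transit, and the large gaps between gadgets that localize all interactions --- and a careful case analysis over the finitely many gem configurations reachable inside each gadget then shows that no ``spurious'' productive move exists, so the player is indeed forced along the sequencer's prescribed path with only the true/false choices free.
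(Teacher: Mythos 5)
Your proposal is correct and follows essentially the same route as the paper: a reduction from 1in3PSAT in which the sequencer forces the player's only free choices to be the activate/skip decisions at the choice activators, yielding a truth assignment, and the mod-6 displacement arithmetic on the clause columns makes a goal wire align (and the goal gem reachable) exactly when each clause has precisely one true variable. Your explicit statement of the polynomial-size/computability part and of the ``structural lemma'' is a slightly more organized presentation of the same argument the paper gives.
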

\begin{proof}
	Let $\phi$ be a formula for the 1in3PSAT problem and construct the Bejeweled field as shown above.
	We consider any possible sequence of moves from the beginning of the game until the sequencer record of the goal wire reaches the check-point. We show how any such sequence can be mapped to a corresponding truth assignment $\pi$ for the variables $x_1, \dots, x_n$, and then we argue that the goal gem can be popped if and only if $\pi$ satisfies $\phi$.
	
	Now we argue that the freedom of the player is essentially restricted in choosing whether to activate or skip the choice wires. Moreover, due to the sequencer, these choices have to be performed in order.
	Indeed, at the beginning of the game the player only has two available moves: he can either activate the choice wire of $x_1$ and then the sequencer, or skip it by only activating the sequencer. 
	In the latter case, the player will still be able to swap a \B of the corresponding choice activator (see column 3 of Figure~\ref{fig:manactiv}) but doing so will only cause $6$ gems in the same column to pop, without affecting any other gadget. We consider $x_1$ to be true if and only if the choice wire has been activated. Notice that, as discussed in Section~\ref{sec:choice_wire}, all the columns corresponding to the clauses containing $x_1$ will fall by a number $\ell \equiv 2 \pmod{6}$ of cells.
	The same argument applies to the following choice wires. When all the choice wires have been activated or skipped the sequencer record for the goal wire reaches the check-point.
	
	As discussed in Section~\ref{sec:goal_wire}, we have that the goal wire can only be traversed to reach the goal gem if and only if the columns of all the clauses have fallen by some number $\ell \equiv 2 \pmod{6}$ of cells. This happens if and only if exactly one variable per clause has been set to true.
\end{proof}

\begin{figure}[!t]
	\centerline{\includegraphics{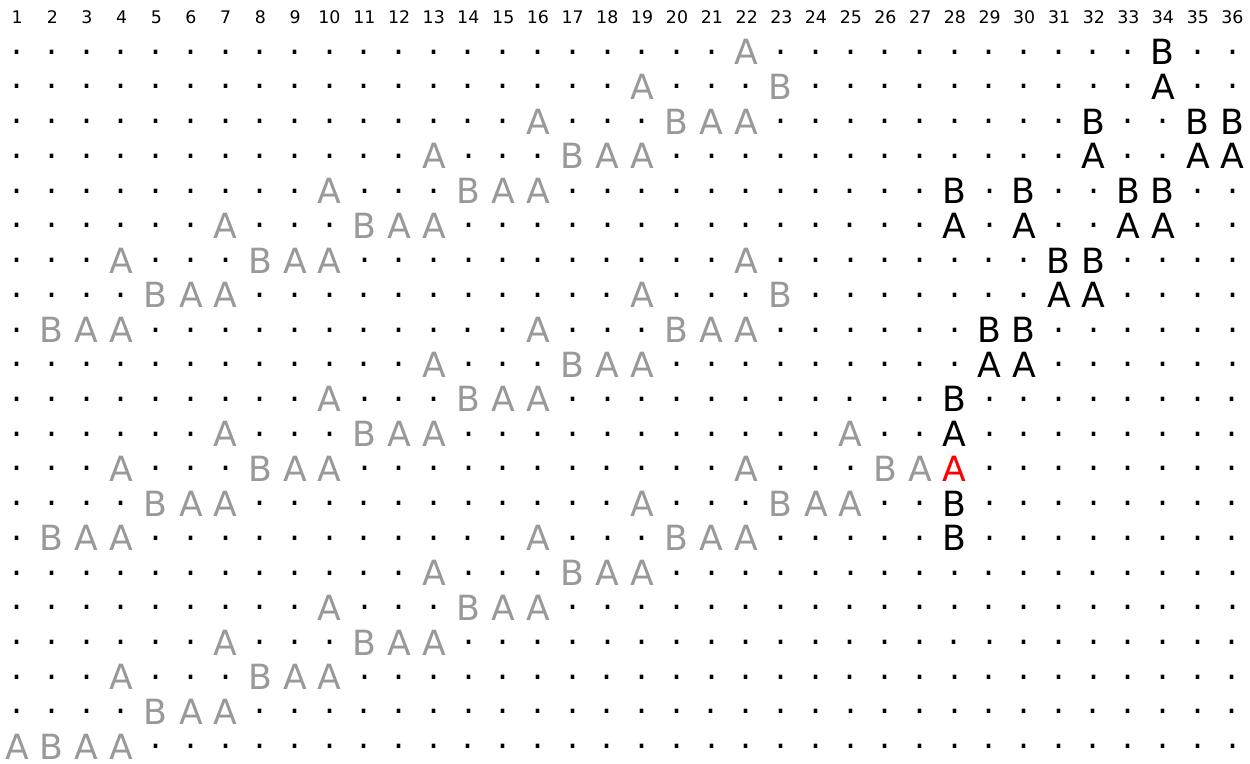}}
	\caption{Wire used in the reduction for \ref{q2}.}
	\label{fig:goal_wire_q2}
\end{figure} 

As a consequence of the previous theorem one can easily prove the following result.
\begin{corollary}
	The decision problems \ref{q2}, \ref{q3}, \ref{q4}, and \ref{q5} are NP-Hard.
\end{corollary}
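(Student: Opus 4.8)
The plan is to keep the instance built in the theorem and to glue to the goal gem one extra \emph{reward gadget}, chosen differently for each of the four questions, whose sole job is to push a designated quantity past a threshold precisely when the goal gem becomes poppable. The base instance has size polynomial in $n+m$, so for each question there is a polynomial bound on the value achievable \emph{without} popping the goal gem; the reward gadget will be built of polynomial size but large enough to beat that bound, placed in a far corner of the board behind the usual large gaps, and filled with the non-matchable pattern of Figure~\ref{fig:filler} so that its gems can be perturbed only by the cascade that the goal gem starts when it pops. By the theorem that cascade is reachable exactly when $\phi$ is satisfiable, and the three properties then follow.

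I would do \ref{q4} first, as it is the cleanest. Let $T$ be the total number of gems in the base instance. I attach, hanging from the goal gem, a long ``wire'' (as in Figure~\ref{fig:goal_wire_q2}) engineered so that popping the goal gem triggers an automatic cascade that pops a further $T+1$ gems, and set $x:=T+1$. If $\phi$ is satisfiable the player pops the goal gem and then the whole wire, for more than $x$ pops; otherwise the goal gem never pops, the wire never moves, and at most $T<x$ gems can be popped in total. For \ref{q2} the very same gadget works: the score of any play is at least the number of gems popped and at most a fixed polynomial $p$ of it, so without the reward at most $p(T)$ points are scored; making the reward wire long enough that it alone is worth more than $p(T)$ and taking $x$ equal to that value separates the two cases.

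For \ref{q5} I would instead append, again hanging from the goal gem, a chain of $t_0+1$ little one-swap ``manual'' sub-gadgets that become available to the player one after another only once the goal gem has popped, where $t_0$ is a polynomial upper bound on the number of player turns in the base instance (each choice activator costs $O(1)$ turns, the goal wire costs $O(m)$, and the number of turns is finite because newly generated gems always appear in the non-matchable filler pattern); set $x:=t_0+1$. A satisfiable $\phi$ lets the player reach the goal gem and then play all the extra turns; an unsatisfiable one caps the turns at $t_0<x$. Finally \ref{q3} combines the \ref{q2} gadget with a move budget: let $K_0$ be a polynomial upper bound on the length of the intended winning sequence (activate or skip each choice wire, let the sequencer run, traverse the now-aligned goal wire, set off the reward), put $k:=K_0+1$ and let $x$ be as in \ref{q2}. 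If $\phi$ is satisfiable the intended sequence has fewer than $k$ moves and scores at least $x$; if $\phi$ is unsatisfiable no sequence of any length scores $x$, hence none of length below $k$ does.

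The step I expect to be delicate is the same one that already pervades the construction: proving that the reward gadget is genuinely isolated, i.e.\ that no sequence of swaps unrelated to the goal gem can trigger an appreciable part of it (or, for \ref{q5}, any of the extra turns) via spontaneous matches or player moves. Once this is settled with the usual gap-and-filler arguments, each of the four modified instances is correct and evidently polynomial-time computable, so \ref{q2}, \ref{q3}, \ref{q4}, and \ref{q5} are all \NP-Hard.
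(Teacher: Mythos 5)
Your proposal matches the paper's own argument essentially step for step: the same reward wire hanging off the goal gem (indeed the same Figure~\ref{fig:goal_wire_q2} construction) with a threshold $x$ exceeding any polynomial bound achievable without popping the goal gem handles \ref{q2} and \ref{q4}, \ref{q3} follows by taking $k$ larger than the length of the intended winning sequence, and \ref{q5} is obtained by extending the structure past the goal gem with player-driven segments. The only difference is that you are more explicit than the paper about bounding what is achievable without the goal gem and about isolating the reward gadget, which is a reasonable amount of extra care rather than a change of approach.
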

\begin{proof}
	We now show how the given construction can be adapted to address each decision problem:
	
	\begin{description}
		\item[\ref{q2} and \ref{q4}] We can add a new kind of wire, which can be activated if and only if the goal gem is popped, as shown in Figure~\ref{fig:goal_wire_q2}. This wire can be arbitrary long and, once activated, will ensure that all its gems will pop in a single continuous sequence. For a suitable length of the wire and choice of $x$, it is obvious that the player can reach a score of $x$ (or pop at least $x$ gems) if and only if the goal gem is popped.
		
		\item[\ref{q3}] This is immediately implied by the hardness of \ref{q2} once we set $k$ to a large (polynomial) value.
		
		\item[\ref{q5}] We can modify the instance by suitably extending the goal wire after the goal gem thus allowing the player to play a given number of moves if and only if the goal gem is popped. 
	\end{description}
	
\end{proof}

\newpage
\bibliographystyle{plain}
\bibliography{main}

\end{document}